\newtheorem{theorem}{Theorem}
\newtheorem{definition}{Definition}
\newtheorem{lemma}{Lemma}
\newtheorem{remark}{Remark}
\newcounter{prob1}
\newcounter{prob2}
\newcounter{prob3}
\newcounter{prob4}
\newcounter{prob5}
\title{\LARGE \bf Bounds on the Achievable Rate of Noisy feedback Gaussian Channels under Linear Feedback Coding Scheme}
\author{ \parbox{5 in}{\centering Chong Li and Nicola Elia
         \thanks{This work was supported by NSF under grant number ECS-0901846}\\
         Department of Electrical and Computer Engineering, Iowa State University\\
         Ames, IA, 50011\\
         Email: $\lbrace$chongli, nelia$\rbrace$@iastate.edu\\}
}
\begin{document}

\maketitle \thispagestyle{empty} \pagestyle{empty}
\begin{abstract}
In this paper, we investigate the additive Gaussian noise channel with noisy feedback. We consider the setup of linear coding of the feedback information and Gaussian signaling of the message (i.e. Cover-Pombra Scheme). Then, we derive the upper and lower bounds on the largest achievable rate for this setup. We show that these two bounds can be obtained by solving two convex optimization problems. Finally, we present some simulations and discussion.
\end{abstract}


\setcounter{prob1}{1}
\setcounter{prob2}{2}
\setcounter{prob3}{3}
\setcounter{prob4}{4}
\setcounter{prob5}{5}
\section{Introduction}
\indent The study of the additive Gaussian noise channel with feedback has been a hot research topic for decades. So far, a large body of work has looked at the ideal feedback case and obtained many notable results \cite{Schalkwijk66,Omura68,Massey1990,cover89,Kim10}. As an illustration, it is known that noiseless feedback improves the error exponent and reduces the coding complexity. However, only few papers have studied channels with noisy feedback and many open problems still exist. Literature on noisy feedback problems can be largely classified into two categories. The first category studies the usefulness of noisy feedback by investigating reliability functions and error exponents \cite{Kim07},\cite{Love_arxiv}. The second focuses on the derivation of coding schemes based on the well-known Schalkwijk-Kailath scheme. We refer interested readers to \cite{Omura68,Lavenberg71,Martins08,Chance10,Kumar} for details.\\
\indent In this paper, we investigate the behavior of the largest achievable rate of the additive Gaussian noise channel with noisy feedback, under the restriction of linear feedback coding scheme (i.e.Cover-Pombra Scheme). We derive informative upper and lower bounds on the largest achievable rate. This upper bound outperforms the bound presented in \cite{Love10}, especially, in the case of having small feedback noise. The lower bound shows the enhancement, in terms of the achievable rate, by exploiting the noisy feedback link. Additionally, the derived bounds provide insight on how the noisy feedback channel behaves with respect to the feedback noise.\\
\indent The paper is organized as follows. In Section \ref{sec_pre}, we introduce some important definitions and lemmas, which are used throughout the paper. In Section III, we introduce the signal model of the noisy feedback channel and then derive the formula of the achievable rate. In Section \ref{sec_upper} and \ref{sec_lower}, we derive an upper bound and a lower bound on the largest achievable rate, respectively. We present some simulation results and discussion in Section \ref{sec_simu} and conclude the paper in Section \ref{sec_conclude}.\\
\indent Notations: Uppercase and corresponding lowercase letters $(e.g. Y,Z,y,z)$ denotes random variables and realizations, respectively. $x^n$ represents the vector $[x_1,x_2,\cdots,x_n]^T$ and $x^0=\emptyset$. $\mathbf{I}_n$ represents an $n\times n$ identity matrix. $\mathbf{K}_n>0$ ($\mathbf{K}_n\geq 0$) denotes that the $n\times n$ matrix $\mathbf{K}_n$ is positive definite (semi-definite). $log$ denotes the logarithm base $2$ and $0\log0=0$. The expectation operator over $X$ is presented as $\mathbb{E}(X)$.

\section{Technical Preliminaries}
\indent In this section, we review some main definitions and Lemmas in information theory.
\begin{definition}\cite{cover06}
The mutual information $I(X;Y)$ between two random variables with joint density $f(x,y)$ is defined as
\begin{equation*}
I(X;Y)=\int f(x,y)\log\frac{f(x,y)}{f(x)f(y)}dxdy
\end{equation*}
\end{definition}

\indent Let $h(X)$ denote the differential entropy of a random variable X. Then it is clear that
\begin{equation*}
I(X;Y)=h(Y)-h(Y|X)
\end{equation*}

\indent Next, we present a useful Lemma as follows.
\begin{lemma}\cite{cover06}
Let the random vector $X\in \mathbb{R}^{n}$ have zero mean and covariance $\mathbf{K}_{x,n}=\mathbb{E}XX^T$ (i.e. $\mathbf{K}_{ij}=\mathbb{E}X_iX_j$, $1\leq i,j\leq n$). Then
\begin{equation*}
h(X)\leq \frac{1}{2}\log(2\pi e)^n\det\mathbf{K}_{x,n}
\end{equation*}
with equality if and only if $X\sim \mathit{N}(0,\mathbf{K}_{x,n})$.
\label{lem_pre01}
\end{lemma}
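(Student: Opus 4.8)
The plan is to prove this by comparing $X$ against the Gaussian vector that shares its covariance and exploiting the non-negativity of relative entropy. First I would let $\phi$ denote the density of $N(0,\mathbf{K}_{x,n})$ and let $f$ denote the (arbitrary) density of $X$; since the claimed maximizer is exactly the Gaussian, the natural quantity to introduce is the Kullback--Leibler divergence $D(f\|\phi)=\int f(x)\log\frac{f(x)}{\phi(x)}\,dx$, which is always $\ge 0$ and equals $0$ if and only if $f=\phi$ almost everywhere. Expanding this divergence gives $0\le D(f\|\phi)=-h(X)-\int f(x)\log\phi(x)\,dx$, so the task reduces to evaluating the cross term $-\int f\log\phi$ and identifying it with the differential entropy of $\phi$ itself.

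The key step, and the one I expect to carry the real content of the argument, is recognizing that $\log\phi(x)$ is (up to an additive constant) a \emph{quadratic form} in $x$. Writing out the Gaussian density explicitly, $-\log\phi(x)$ equals a constant depending only on $n$ and $\det\mathbf{K}_{x,n}$, plus a term proportional to $x^{T}\mathbf{K}_{x,n}^{-1}x$. Therefore $\int f(x)\log\phi(x)\,dx$ depends on $f$ only through its mean and second moments. Since $X$ has zero mean and covariance $\mathbf{K}_{x,n}$ by hypothesis — exactly the moments of $\phi$ — the cross term is insensitive to replacing $f$ by $\phi$, and hence $\int f\log\phi=\int\phi\log\phi=-h(\phi)$. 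This is the crux: the quadratic structure of the log-density is precisely what makes the cross-entropy moment-matched distributions agree.

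Combining the two observations yields $h(X)\le -\int f\log\phi = h(\phi)$, so it remains to compute $h(\phi)$ directly. This is a routine evaluation: substituting the Gaussian density into $h(\phi)=-\int\phi\log\phi$ and using $\mathbb{E}_\phi[x^{T}\mathbf{K}_{x,n}^{-1}x]=n$ (the trace of $\mathbf{K}_{x,n}^{-1}\mathbf{K}_{x,n}$) gives $h(\phi)=\frac{1}{2}\log\big((2\pi e)^{n}\det\mathbf{K}_{x,n}\big)$, which is the asserted bound. Finally, the equality case follows immediately from the equality condition for relative entropy: the inequality $h(X)\le h(\phi)$ is tight exactly when $D(f\|\phi)=0$, i.e. when $f=\phi$ almost everywhere, which is the statement that $X\sim N(0,\mathbf{K}_{x,n})$. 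The only point requiring mild care throughout is the bookkeeping of the logarithm base (base $2$ here), but this merely rescales every $\log$ by a common factor of $\log e$ and does not affect the structure of the argument.
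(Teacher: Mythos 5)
Your proof is correct: the paper itself states this lemma without proof (citing Cover and Thomas), and your argument --- nonnegativity of $D(f\|\phi)$ plus the observation that $\int f\log\phi$ depends only on first and second moments because $\log\phi$ is quadratic --- is precisely the standard proof given in that reference. No gaps; the only implicit assumption (shared by the textbook statement) is that $\mathbf{K}_{x,n}$ is nonsingular so that $\phi$ exists and $h(X)>-\infty$ is meaningful.
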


\indent Finally, we introduce an meaningful notion of directivity to the information flow through a channel \cite{Massey1990}.
\begin{definition}
The directed information $I(X^n\rightarrow Y^n)$ from a sequence $X^n$ to a sequence $Y^n$ is defined by
\begin{equation*}
I(X^n\rightarrow Y^n)=\sum_{i=1}^{n}I(X^i;Y_{i}|Y^{i-1}).
\end{equation*}
\end{definition}

\indent It has been shown in \cite{Massey1990}, when feedback is present, directed information is a more useful quantity than the traditional mutual information.

\label{sec_pre}

\section{Modeling and Achievable Rate}

\begin{figure}
\includegraphics[scale=0.4]{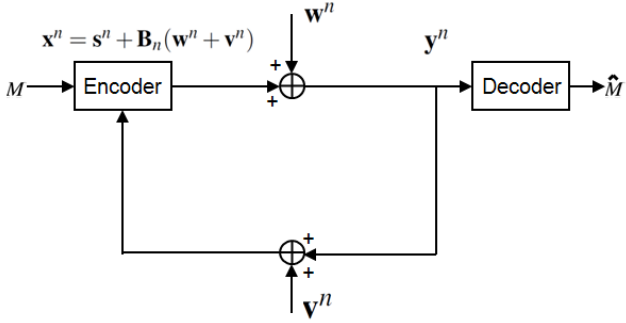}
\caption{A additive Gaussian noise channel with noisy feedback}
\label{ISIT_scheme01}
\end{figure}

\subsection{Modeling}
\indent Consider a point-to-point additive Gaussian noise channel with access to an additive Gaussian noise feedback link. Since it is difficult to characterize the capacity in general, we consider a system with linear encoding of the feedback signal and Gaussian signaling of the message (i.e. shown in a vector form in Fig. \ref{ISIT_scheme01})\cite{Love10}. \\

\vspace{2mm}

\indent \textit{The channel input signal:}  $\mathbf{x}^n=\mathbf{s}^n+\mathbf{B}_n(\mathbf{w}^n+\mathbf{v}^n)$\\ \indent \textit{The channel output signal:} $\mathbf{y}^n=\mathbf{s}^n+\mathbf{B}_n(\mathbf{w}^n+\mathbf{v}^n)+\mathbf{w}^n$\\
\indent \textit{The power constraint:} $tr(\mathbf{K}_{s,n}+\mathbf{B}_n(\mathbf{K}_{w,n}+\mathbf{K}_{v,n})\mathbf{B}_n^T)\leq nP$\\
\vspace{1mm}

where $\mathbf{w}^n\sim \mathit{N}(0,\mathbf{K}_{w,n})$ and $\mathbf{v}^n\sim \mathit{N}(0,\mathbf{K}_{v,n})$. We assume that $\mathbf{K}_{w,n}>0$ and $\mathbf{K}_{v,n}>0$ and, therefore, these covariance matrices are invertible. Here, $\mathbf{s}^n\sim \mathit{N}(0,\mathbf{K}_{s,n})$ is the message information vector. $\mathbf{B}_n$ is an $n\times n$ strictly lower triangular linear encoding matrix. Note that the one-step delay in the feedback link is dealt with the structure of matrix $\mathbf{B}_n$ and random variables $\mathbf{s}^n$,$\mathbf{v}^n$,$\mathbf{w}^n$ are automatically assumed to be independent.\\
\begin{remark}
The bounds provided later in the paper are only valid for this specific setup. In other words, the bounds may not be true for the capacity (i.e. the maximum achievable rate over all encoding strategies). Informative computable bounds on the capacity of channels with noisy feedback, however, are not known.
\end{remark}

\subsection{n-block Achievable Rate}
\indent Based on the above model, we obtain the n-block achievable rate $R_n^{noisy}$ as
\begin{equation*}
\begin{split}
R_n^{noisy}=&\frac{1}{n}I(M;Y^n)\\
=&\frac{1}{n}I(S^n;Y^n)  \qquad (a)\\
=&\frac{1}{n}(h(Y^n)-h(Y^n|S^n))\\
=&\frac{1}{n}\log{\frac{\det{((\mathbf{I}_n+\mathbf{B}_n)\mathbf{K}_{w,n}(\mathbf{I}_n+\mathbf{B}_n)^T+\mathbf{B}_n\mathbf{K}_{v,n}\mathbf{B}_n^T+\mathbf{K}_{s,n})}}{\det{((\mathbf{I}_n+\mathbf{B}_n)\mathbf{K}_{w,n}(\mathbf{I}_n+\mathbf{B}_n)^T+\mathbf{B}_n\mathbf{K}_{v,n}\mathbf{B}_n^T)}}}\\
\end{split}
\end{equation*}
(a) follows the fact that the message information vector $S^n$ is determined by the message $M$ and the last equality follows from Lemma \ref{lem_pre01}. We denote the largest n-block achievable rate under power constraint $P$ as $R_{n,max}^{noisy}(P)$ where
\begin{equation}
R_{n,max}^{noisy}(P)=\max_{\mathbf{B}_n,\mathbf{K}_{s,n}}R_n^{noisy}\\
\label{eq1_1}
\end{equation}

\indent If the feedback is ideal (i.e. $\mathbf{K}_{v,n}=\mathbf{0}_{n}$), expression (\ref{eq1_1}) is simplified to
\begin{equation*}
C_n^{fb}(P)=\max_{\mathbf{B}_n,\mathbf{K}_{s,n}}\frac{1}{n}\log{\frac{\det{((\mathbf{I}_n+\mathbf{B}_n)\mathbf{K}_{w,n}(\mathbf{I}_n+\mathbf{B}_n)^T+\mathbf{K}_{s,n})}}{\det\mathbf{K}_{w,n}}}\\
\end{equation*}
where $C_n^{fb}(P)$ denotes the $n$-block capacity of the additive Gaussian noise channel with ideal feedback and $tr(\mathbf{K}_{s,n}+\mathbf{B}_n\mathbf{K}_{w,n}\mathbf{B}_n^T)\leq nP$. Note that the fact $\det(\mathbf{I}_n+\mathbf{B}_n)=\det(\mathbf{I}_n+\mathbf{B}_n)^T=1$ is applied in the simplification. \cite{Boyed98} found that this problem can be transformed into a well-known convex optimization problem called the matrix determinant maximization (max-det) problem and, then, solved efficiently.\\
\indent If there exists no feedback (i.e. $\mathbf{B}_{n}=\mathbf{0}_{n}$), expression (\ref{eq1_1}) is simplified to
\begin{equation*}
C_n^{open}(P)=\max_{\mathbf{K}_{s,n}}\frac{1}{n}\log{\frac{\det(\mathbf{K}_{w,n}+\mathbf{K}_{s,n})}{\det\mathbf{K}_{w,n}}}\\
\end{equation*}
where $C_n^{open}(P)$ denotes the $n$-block capacity of the open-loop additive Gaussian noise channel and $tr(\mathbf{K}_{s,n})\leq nP$. It is well known that this optimization problem can be solved by water-filling on the eigenvalues of $\mathbf{K}_{w,n}^{-1}$.

\indent Unlike the above two simplified forms, the optimization problem (\ref{eq1_1}) is difficult to solve (i.e. not easily expressed in a convex form). Furthermore, the expression for $R_n^{noisy}$ does not provide much information about the behavior of the channel with respect to the feedback noise. This motivates us to derive the effective upper and lower bounds on $R_{n,max}^{noisy}(P)$, from which we can discover some characterizations of the noisy feedback channel.   
\label{sec_m}

\section{An Upper Bound on $R_{n,max}^{noisy}(P)$}
\indent In this section, we first present an upper bound on the achievable rate of a general channel with additive noise feedback (i.e. not restricted to be an additive Gaussian noise channel). Then, we show that, for the noisy feedback Gaussian channel under linear feedback coding scheme, this upper bound can be obtained by solving a convex optimization problem.
\begin{lemma}
For a point-to-point communication channel with additive noise feedback, we have
\begin{equation*}
I(M;Y^n)\leq I(X^n\rightarrow Y^n|V^n)\leq I(X^n\rightarrow Y^n)\leq I(X^n;Y^n)
\end{equation*}
The first and second equalities hold if there exists an ideal feedback (i.e. $V^n=0$). The last equality holds if these exists no feedback.
\label{lem01}
\end{lemma}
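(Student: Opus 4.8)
The plan is to prove the chain one inequality at a time and read off the equality conditions at the end. Throughout I will exploit three structural facts of the model: the message $M$, the forward channel noise, and the feedback noise $V^n$ are mutually independent; the output obeys $Y_i \perp V^n \mid (X^i,Y^{i-1})$ (the noise that drives $Y_i$ is independent of the feedback noise); and $Y_i \perp M \mid (X^i,Y^{i-1},V^n)$ (the message affects $Y_i$ only through the inputs). For the additive model $Y^n=X^n+W^n$ these are transparent, since conditioning on $(X^i,Y^{i-1})$ determines $W^{i-1}=Y^{i-1}-X^{i-1}$ and the fresh noise $W_i$ is independent of $(M,V^n)$ given $W^{i-1}$. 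For the rightmost inequality I would expand both sides by the chain rule, $I(X^n\to Y^n)=\sum_i[h(Y_i|Y^{i-1})-h(Y_i|X^i,Y^{i-1})]$ and $I(X^n;Y^n)=\sum_i[h(Y_i|Y^{i-1})-h(Y_i|X^n,Y^{i-1})]$, so that the difference is $\sum_i[h(Y_i|X^i,Y^{i-1})-h(Y_i|X^n,Y^{i-1})]\ge 0$ because conditioning on the extra coordinates $X_{i+1}^n$ cannot increase entropy. Equality forces $Y_i\perp X_{i+1}^n\mid(X^i,Y^{i-1})$ for every $i$, which is precisely the no-feedback case $\mathbf{B}_n=\mathbf{0}$, where future inputs carry no information about past outputs.

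For the middle inequality I would bring $V^n$ into the $i$-th directed-information term by expanding $I(X^i,V^n;Y_i|Y^{i-1})$ two ways, obtaining the identity $I(X^i;Y_i|Y^{i-1},V^n)-I(X^i;Y_i|Y^{i-1})=I(V^n;Y_i|X^i,Y^{i-1})-I(V^n;Y_i|Y^{i-1})$. Summing over $i$, the term $\sum_i I(V^n;Y_i|X^i,Y^{i-1})$ vanishes by the assumption $Y_i\perp V^n\mid(X^i,Y^{i-1})$, while $\sum_i I(V^n;Y_i|Y^{i-1})=I(V^n;Y^n)$ by the chain rule. This gives the clean relation $I(X^n\to Y^n|V^n)=I(X^n\to Y^n)-I(V^n;Y^n)\le I(X^n\to Y^n)$, with equality iff $I(V^n;Y^n)=0$, which holds for ideal feedback $V^n=0$.

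The leftmost inequality is the crux. The key observation is that conditioning on the whole feedback-noise sequence $V^n$ converts the noisy feedback into noiseless feedback: since $X_i$ is a function of $M$ and the corrupted feedback $\tilde Y^{i-1}=Y^{i-1}+V^{i-1}$, once $V^n$ is fixed $X_i$ becomes a deterministic function of $(M,Y^{i-1})$, so $X^i$ is measurable with respect to $\sigma(M,Y^{i-1},V^n)$. Combined with $Y_i\perp M\mid(X^i,Y^{i-1},V^n)$ this yields $h(Y_i|M,Y^{i-1},V^n)=h(Y_i|X^i,Y^{i-1},V^n)$ for each $i$, i.e. the conditional Massey identity $I(M;Y^n|V^n)=I(X^n\to Y^n|V^n)$. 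Finally, the independence $M\perp V^n$ gives $I(M;Y^n|V^n)=I(M;Y^n)+I(M;V^n|Y^n)\ge I(M;Y^n)$, so $I(M;Y^n)\le I(M;Y^n|V^n)=I(X^n\to Y^n|V^n)$. For ideal feedback the conditioning on the constant $V^n=0$ is vacuous, both steps are tight, and the first equality follows.

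I expect the leftmost inequality to be the main obstacle, since it is the only step that genuinely uses the coding constraint; the delicate points are justifying that fixing $V^n$ recovers the noiseless-feedback structure and that the residual dependence of $Y_i$ on $M$ disappears once $(X^i,Y^{i-1},V^n)$ is conditioned on. By contrast, the other two inequalities are routine chain-rule manipulations once the two Markov relations above are in place.
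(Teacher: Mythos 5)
Your proof is correct, and its crux---the leftmost inequality---is exactly the paper's argument: you condition on $V^n$, note that $X^i$ is then determined by $(M,Y^{i-1},V^n)$, invoke the Markov chain $M\rightarrow (X^i,Y^{i-1},V^n)\rightarrow Y_i$ to get $I(M;Y^n|V^n)=I(X^n\rightarrow Y^n|V^n)$, and use $M\perp V^n$ to conclude $I(M;Y^n)\leq I(M;Y^n|V^n)$; these are precisely steps (a)--(c) in the paper. Where you genuinely differ is in the other two links, and both differences buy something. For the middle inequality the paper drops $V^n$ from $h(Y_i|Y^{i-1},X^i,V^n)$ via the Markov chain $V^n\rightarrow(X^i,Y^{i-1})\rightarrow Y_i$ and then applies conditioning-reduces-entropy term by term; your double expansion of $I(X^i,V^n;Y_i|Y^{i-1})$ uses the same Markov chain but produces the exact identity $I(X^n\rightarrow Y^n|V^n)=I(X^n\rightarrow Y^n)-I(V^n;Y^n)$, which quantifies the slack (the paper's termwise slack sums to the same quantity but this is never made explicit) and gives a clean necessary-and-sufficient equality condition $I(V^n;Y^n)=0$; this identity also dovetails with the paper's later discussion of the feedback link ``shutting off'' as the feedback noise grows. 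For the rightmost inequality the paper simply cites Massey, whereas you reprove it by the standard comparison of $h(Y_i|X^i,Y^{i-1})$ with $h(Y_i|X^n,Y^{i-1})$, making the lemma self-contained. One small wording caution: you write that equality in the last step ``forces'' $Y_i\perp X_{i+1}^n\mid(X^i,Y^{i-1})$, ``which is precisely the no-feedback case''---the lemma only asserts sufficiency (no feedback implies equality), and that conditional independence does not by itself force $\mathbf{B}_n=\mathbf{0}_n$; the direction you actually need is that without feedback $X^n$ is a function of $M$ alone, so $W_i$ is independent of $X_{i+1}^n$ given $(X^i,W^{i-1})$ and the two conditional entropies coincide.
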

\begin{proof}
\begin{equation*}
\begin{split}
&I(M;Y^n)\\
=&h(M)-h(M|Y^n)\\
\leq &h(M)-h(M|Y^n,V^n)\\
\stackrel{(a)}{=}&h(M|V^n)-h(M|Y^n,V^n)\\
=&I(M;Y^n|V^n)\\
=&h(Y^n|V^n)-h(Y^n|M,V^n)\\
\end{split}
\end{equation*}
\begin{equation*}
\begin{split}
=&\sum_{i=1}^n h(Y_i|Y^{i-1},V^n)-h(Y_i|Y^{i-1},M,V^n)\\
\stackrel{(b)}{=}&\sum_{i=1}^n h(Y_i|Y^{i-1},V^n)-h(Y_i|Y^{i-1},M,V^n,X^i) \\
\stackrel{(c)}{=}&\sum_{i=1}^n h(Y_i|Y^{i-1},V^n)-h(Y_i|Y^{i-1},X^i, V^n) \\
=&\sum_{i=1}^n I(X^i;Y_i|Y^{i-1},V^n)\\
=&I(X^n\rightarrow Y^n|V^n)\\
\end{split}
\end{equation*}
(a) follows from the fact that $M$ and $V^n$ are independent. (b) follows from the fact that $X^i$ can be determined by $M$ and the outputs of the feedback link (i.e. $Y^{i-1}+V^{i-1}$). (c) follows from the Markov chain $M\rightarrow (Y^{i-1},X^i,V^n) \rightarrow Y_i$. Note that the equality holds if $V^n=0$.\\
\indent Next, we have
\begin{equation*}
\begin{split}
&I(X^n\rightarrow Y^n|V^n)\\
\stackrel{(d)}{=}&\sum_{i=1}^n h(Y_i|Y^{i-1},V^n)-h(Y_i|Y^{i-1},X^i, V^n) \\
\stackrel{(e)}{=}&\sum_{i=1}^n h(Y_i|Y^{i-1},V^n)-h(Y_i|Y^{i-1},X^i) \\
\leq &\sum_{i=1}^n h(Y_i|Y^{i-1})-h(Y_i|Y^{i-1},X^i)\\
=&\sum_{i=1}^n I(X^i;Y_i|Y^{i-1})\\
=&I(X^n\rightarrow Y^n)\\
\end{split}
\end{equation*}
where (d) follows from step (c) and (e) follows from the Markov chain $V^n \rightarrow (Y^{i-1},X^i) \rightarrow Y_i$. Note that the equality holds if $V^n=0$.\\
\indent The last inequality $I(X^n\rightarrow Y^n)\leq I(X^n;Y^n)$ is proved in \cite{Massey1990}.
\end{proof}

\indent It is known that, for ideal feedback channels, the directed information $I(X^n\rightarrow Y^n)$ is an appropriate measure on the achievable rate and, therefore, can correctly characterize the ideal feedback channel capacity \cite{Tati09}. However, Lemma \ref{lem01} shows that, for noisy feedback channels, the conditional directed information $I(X^n\rightarrow Y^n|V^n)$ performs as a better upper bound on the achievable rate than $I(X^n\rightarrow Y^n)$. This motivates us to take $I(X^n\rightarrow Y^n|V^n)$ as an upper bound on $C_n^{noisy}(P)$ instead of $I(X^n\rightarrow Y^n)$ and investigate the following optimization problem.
\begin{equation}
\begin{split}
\quad \underset{\mathbf{B}_n,\mathbf{K}_{s,n}}{\rm maximize} &\quad \frac{1}{n}I(X^n\rightarrow Y^n|V^n)\\
\text{subject to} &\quad tr(\mathbf{K}_{x,n})\leq nP, \quad \mathbf{K}_{s,n}\geq 0 \\
&\quad \text{$\mathbf{B}_n$ is strictly lower triangular}\\
\end{split}
\label{eq2_01}
\end{equation}

\indent Next, we show that the above optimization problem can be transformed into a convex form.
\begin{theorem}
An upper bound on the largest n-block achievable rate of linear feedback coding scheme for Gaussian channels with additive noise feedback, as shown in Fig. \ref{ISIT_scheme01}, can be obtained as the optimal value of the following convex optimization problem.
\begin{equation*}
\begin{split}
\quad \underset{\mathbf{H}_n,\mathbf{B}_{n}}{\rm maximize} &\quad \frac{1}{2n}\log\det \begin{bmatrix} \mathbf{K}_{v,n}^{-1} & \mathbf{B}_n^T \\ \mathbf{B}_n & \mathbf{H}_n \end{bmatrix}-\frac{1}{2n}\log\det(\mathbf{K}_{v,n}^{-1}\mathbf{K}_{w,n})\\
\text{subject to} &\quad tr(\mathbf{H}_n-\mathbf{K}_{w,n}\mathbf{B}_n^T-\mathbf{B}_n\mathbf{K}_{w,n}-\mathbf{K}_{w,n})\leq nP \\
&\quad \begin{bmatrix}  \mathbf{H}_n & \mathbf{I}_n+\mathbf{B}_n^T & \mathbf{B}_n^T\\ \mathbf{I}_n+\mathbf{B}_n & \mathbf{K}_{w,n}^{-1}& \mathbf{0}_{n}\\ \mathbf{B}_n & \mathbf{0}_{n}& \mathbf{K}_{v,n}^{-1} \end{bmatrix}\geq 0\\
&\quad \text{$\mathbf{B}_n$ is strictly lower triangular}\\
\end{split}
\end{equation*}
\label{thm01}
\end{theorem}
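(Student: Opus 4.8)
The plan is to first reduce the conditional directed information to a single log-determinant for the Gaussian model, and then convexify the resulting program by a change of variables combined with the Schur complement. For the first step I would reuse the chain-rule decomposition already established in the proof of Lemma \ref{lem01}, namely $I(X^n\rightarrow Y^n|V^n)=\sum_{i=1}^n\left(h(Y_i|Y^{i-1},V^n)-h(Y_i|Y^{i-1},X^i,V^n)\right)$. The first sum telescopes to $h(Y^n|V^n)$. For the second sum, I would argue that, given $X^{i-1}$, conditioning on $(Y^{i-1},X^i,V^n)$ is equivalent to conditioning on $(W^{i-1},S_i,V^n)$, because $\mathbf{B}_n$ is strictly lower triangular; since $W^n$ is independent of $(S^n,V^n)$, each term collapses to $h(W_i|W^{i-1})$ and the sum is $h(W^n)$. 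Writing the output as $Y^n=S^n+(\mathbf{I}_n+\mathbf{B}_n)W^n+\mathbf{B}_nV^n$ and noting that conditioning on $V^n$ only shifts the mean, Lemma \ref{lem_pre01} then yields
\begin{equation*}
\frac{1}{n}I(X^n\rightarrow Y^n|V^n)=\frac{1}{2n}\log\frac{\det\left((\mathbf{I}_n+\mathbf{B}_n)\mathbf{K}_{w,n}(\mathbf{I}_n+\mathbf{B}_n)^T+\mathbf{K}_{s,n}\right)}{\det\mathbf{K}_{w,n}}.
\end{equation*}

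Next I would introduce the auxiliary variable $\mathbf{H}_n=\mathbf{K}_{s,n}+(\mathbf{I}_n+\mathbf{B}_n)\mathbf{K}_{w,n}(\mathbf{I}_n+\mathbf{B}_n)^T+\mathbf{B}_n\mathbf{K}_{v,n}\mathbf{B}_n^T$, chosen precisely to absorb every term that is quadratic in $\mathbf{B}_n$. With this substitution the numerator above equals $\mathbf{H}_n-\mathbf{B}_n\mathbf{K}_{v,n}\mathbf{B}_n^T$, and by the Schur complement $\det(\mathbf{H}_n-\mathbf{B}_n\mathbf{K}_{v,n}\mathbf{B}_n^T)=\det\begin{bmatrix}\mathbf{K}_{v,n}^{-1}&\mathbf{B}_n^T\\ \mathbf{B}_n&\mathbf{H}_n\end{bmatrix}/\det\mathbf{K}_{v,n}^{-1}$, which reproduces the stated objective once the constant $-\frac{1}{2n}\log\det(\mathbf{K}_{v,n}^{-1}\mathbf{K}_{w,n})$ is added. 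The power constraint $tr(\mathbf{K}_{s,n}+\mathbf{B}_n(\mathbf{K}_{w,n}+\mathbf{K}_{v,n})\mathbf{B}_n^T)\leq nP$ becomes, after the same substitution, the affine constraint $tr(\mathbf{H}_n-\mathbf{K}_{w,n}-\mathbf{B}_n\mathbf{K}_{w,n}-\mathbf{K}_{w,n}\mathbf{B}_n^T)\leq nP$, since all the quadratic contributions cancel exactly.

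It remains to handle $\mathbf{K}_{s,n}\geq 0$. From the definition of $\mathbf{H}_n$ one has $\mathbf{K}_{s,n}=\mathbf{H}_n-(\mathbf{I}_n+\mathbf{B}_n)\mathbf{K}_{w,n}(\mathbf{I}_n+\mathbf{B}_n)^T-\mathbf{B}_n\mathbf{K}_{v,n}\mathbf{B}_n^T$, and applying the Schur complement with respect to the block $\mathrm{diag}(\mathbf{K}_{w,n}^{-1},\mathbf{K}_{v,n}^{-1})>0$ shows that $\mathbf{K}_{s,n}\geq 0$ is equivalent to the $3\times 3$ block LMI in the statement; the strictly-lower-triangular pattern of $\mathbf{B}_n$ remains a set of linear equalities. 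I would then verify that, for fixed $\mathbf{B}_n$, the affine substitution $\mathbf{K}_{s,n}\mapsto\mathbf{H}_n$ is a bijection carrying the constraint $\mathbf{K}_{s,n}\geq 0$ onto the LMI and the power constraint onto itself while preserving the objective value, so that the maximization of $\frac{1}{n}I(X^n\rightarrow Y^n|V^n)$ in problem (\ref{eq2_01}) and the new program have the same optimal value.

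Finally I would establish convexity and assemble the bound. The block matrix in the objective is affine in $(\mathbf{B}_n,\mathbf{H}_n)$ and $\log\det$ is concave on the positive-definite cone, so the objective is concave; the trace constraint is affine, the LMI is affine in the variables and hence defines a convex set, and the triangular pattern is linear. Maximizing a concave function over a convex set is a convex (max-det) program, and combining its value with Lemma \ref{lem01}, which gives $I(M;Y^n)\leq I(X^n\rightarrow Y^n|V^n)$, shows that this value upper-bounds $R_{n,max}^{noisy}(P)$. The main obstacle is the middle construction: recognizing the single substitution $\mathbf{H}_n$ that simultaneously linearizes the objective and the power constraint while converting the quadratic-in-$\mathbf{B}_n$ condition $\mathbf{K}_{s,n}\geq 0$ into an LMI, and checking that this reformulation neither enlarges nor shrinks the optimal value.
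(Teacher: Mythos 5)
Your proposal follows essentially the same route as the paper's own proof: the same reduction of $I(X^n\rightarrow Y^n|V^n)$ to $\frac{1}{2}\log\frac{\det\left((\mathbf{I}_n+\mathbf{B}_n)\mathbf{K}_{w,n}(\mathbf{I}_n+\mathbf{B}_n)^T+\mathbf{K}_{s,n}\right)}{\det\mathbf{K}_{w,n}}$ by reusing the steps of Lemma \ref{lem01} and the Gaussian entropy bound of Lemma \ref{lem_pre01}, the identical substitution $\mathbf{H}_n=(\mathbf{I}_n+\mathbf{B}_n)\mathbf{K}_{w,n}(\mathbf{I}_n+\mathbf{B}_n)^T+\mathbf{K}_{s,n}+\mathbf{B}_n\mathbf{K}_{v,n}\mathbf{B}_n^T$, and the same two Schur-complement equivalences converting the objective and the constraint $\mathbf{K}_{s,n}\geq 0$ into the stated max-det form. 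The only difference is that you make explicit the bijectivity of the change of variables and the concavity/convexity verification that the paper leaves implicit, which is a harmless elaboration of the same argument.
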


\begin{proof}
\indent We are beginning with the optimization problem (\ref{eq2_01}). Let $\mathbf{H}_n=(\mathbf{I}_n+\mathbf{B}_n)\mathbf{K}_{w,n}(\mathbf{I}_n+\mathbf{B}_n)^T+\mathbf{K}_{s,n}+\mathbf{B}_n\mathbf{K}_{v,n}\mathbf{B}_n^T$, we have
\begin{equation*}
\begin{split}
&I(X^n\rightarrow Y^n|V^n)\\
\stackrel{(a)}{=}&\sum_{i=1}^n h(Y_i|Y^{i-1},V^n)-h(Y_i|Y^{i-1},X^i)\\
=&\sum_{i=1}^n h(Y_i|Y^{i-1},V^n)-h(X_i+W_i|Y^{i-1},X^i,W^{i-1})\\
=&\sum_{i=1}^n h(Y_i|Y^{i-1},V^n)-h(W_i|W^{i-1})\\
=&h(Y^n|V^n)-h(W^n)\\
=&h((\mathbf{I}_n+\mathbf{B}_n)W^n+\mathbf{B}_nV^n+S^n|V^n)-h(W^n)\\
=&h((\mathbf{I}_n+\mathbf{B}_n)W^n+S^n)-h(W^n)\\
\stackrel{(b)}{=}&\frac{1}{2}\log{\frac{\det{((\mathbf{I}_n+\mathbf{B}_n)\mathbf{K}_{w,n}(\mathbf{I}_n+\mathbf{B}_n)^T+\mathbf{K}_{s,n})}}{\det{\mathbf{K}_{w,n}}}}\\
=&\frac{1}{2}\log{\frac{\det{(\mathbf{H}_n-\mathbf{B}_n\mathbf{K}_{v,n}\mathbf{B}_n^T)}}{\det{\mathbf{K}_{w,n}}}}\\
\end{split}
\end{equation*}
where (a) follows from step (e) in the proof of Lemma \ref{lem01} and (b) follows from Lemma \ref{lem_pre01}.\\
\indent Also, we have
\begin{equation*}
\begin{split}
tr(\mathbf{K}_{x,n})\leq nP &\Leftrightarrow tr(\mathbf{K}_{s,n}+\mathbf{B}_n(\mathbf{K}_{v,n}+\mathbf{K}_{w,n})\mathbf{B}_n^T)\leq nP \\
&\Leftrightarrow tr(\mathbf{H}_n-\mathbf{K}_{w,n}\mathbf{B}_n^T-\mathbf{B}_n\mathbf{K}_{w,n}-\mathbf{K}_{w,n})\leq nP\\
\end{split}
\end{equation*}

\indent Next, we have the following equivalences by applying the Schur complement. \\
(1).$\det\begin{bmatrix} \mathbf{K}_{v,n}^{-1} & \mathbf{B}_n^T \\ \mathbf{B}_n & \mathbf{H}_n \end{bmatrix}=\det(\mathbf{H}_n-\mathbf{B}_n\mathbf{K}_{v,n}\mathbf{B}_n^T)\det\mathbf{K}_{v,n}^{-1}$.\\
(2).\begin{equation*}
\begin{split}
\mathbf{K}_{s,n}\geq 0 &\Leftrightarrow \mathbf{H}_n-(\mathbf{I}_n+\mathbf{B}_n)\mathbf{K}_{w,n}(\mathbf{I}_n+\mathbf{B}_n)^T-\mathbf{B}_n\mathbf{K}_{v,n}\mathbf{B}_n^T\geq 0\\
&\Leftrightarrow \begin{bmatrix}  \mathbf{H}_n & \mathbf{I}_n+\mathbf{B}_n^T & \mathbf{B}_n^T\\ \mathbf{I}_n+\mathbf{B}_n & \mathbf{K}_{w,n}^{-1}& \mathbf{0}_{n}\\ \mathbf{B}_n & \mathbf{0}_{n}& \mathbf{K}_{v,n}^{-1} \end{bmatrix}\geq 0\\
\end{split}
\end{equation*}

\indent By taking simple replacements on the optimization problem (\ref{eq2_01}), the proof is complete.
\end{proof}

\indent Note that $ \mathbf{H}_n$ is the covariance of the received signal $y^n$. This upper bound provides interesting insight because it shows that the effect of the noise in the feedback link can be formulated as the allocation of the channel input power $P$. As shown in the proof of Theorem \ref{thm01}, we can rewrite (\ref{eq2_01}) as
\begin{equation}
\begin{split}
\quad \underset{\mathbf{B}_n,\mathbf{K}_{s,n}}{\rm maximize} &\quad \frac{1}{2n}\log{\frac{\det{((\mathbf{I}_n+\mathbf{B}_n)\mathbf{K}_{w,n}(\mathbf{I}_n+\mathbf{B}_n)^T+\mathbf{K}_{s,n})}}{\det{\mathbf{K}_{w,n}}}}\\
\text{subject to} &\quad tr(\mathbf{K}_{s,n}+\mathbf{B}_n(\mathbf{K}_{v,n}+\mathbf{K}_{w,n})\mathbf{B}_n^T)\leq nP , \quad \mathbf{K}_{s,n}\geq 0 \\
&\quad \text{$\mathbf{B}_n$ is strictly lower triangular}\\
\end{split}
\label{eq2_02}
\end{equation}
$\mathbf{K}_{v,n}$ herein only affects the power constraint. If $\mathbf{K}_{v,n}=\mathbf{0}_{n}$, the optimization problem (\ref{eq2_02}) recovers the n-block capacity of channels with ideal feedback \cite{cover89}. This implies that, for channels with noisy feedback, it is necessary to assign certain amount of power to cancel the effect of the feedback noise such that the message can be recovered by the decoder with an arbitrarily small error probability. If the noise in the feedback link increases (i.e. $\mathbf{K}_{v,n}$ grows large in some sense), the feedback benefit in increasing reliable transmission rate vanishes. Namely, the noisy feedback system behaves like a nonfeedback system since, due to the power constraint, $\mathbf{B}_n$ approaches $\mathbf{0}_n$ as $\mathbf{K}_{v,n}$ grows. Note that this upper bound is tight when $\mathbf{K}_{v,n}$ is either small or large.

\label{sec_upper}

\section{A Lower Bound on $R_{n,max}^{noisy}(P)$}
\begin{figure}
\includegraphics[scale=0.22]{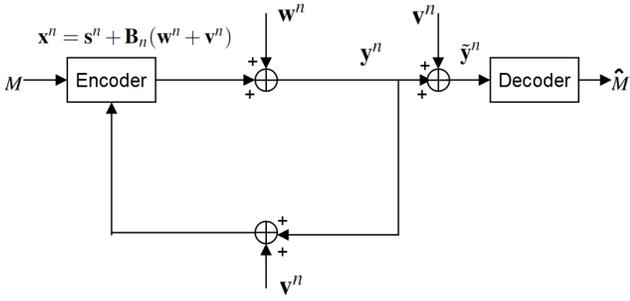}
\caption{A modified additive Gaussian noise channel with noisy feedback}
\label{ISIT_scheme02}
\end{figure}

\indent First of all, we consider a new channel with noisy feedback, as shown in Fig.\ref{ISIT_scheme02}. An identical Gaussian noise $\mathbf{v}$ is added on the channel output. Then,
\begin{equation*}
\begin{split}
\tilde{\mathbf{y}}^n=&\mathbf{x}^n+\mathbf{w}^n+\mathbf{v}^n\\   =&\mathbf{s}^n+(\mathbf{B}_n+\mathbf{I}_n)(\mathbf{w}^n+\mathbf{v}^n)\\
\end{split}
\end{equation*}
Since the decoder is not able to access the new additive noise, the largest achievable rate of the new channel must be a lower bound on that of the original channel (shown in Fig.\ref{ISIT_scheme01}). This motivates us to solve the following optimization problem for obtaining this lower bound.
\begin{equation}
\begin{split}
\quad \underset{\mathbf{B}_n,\mathbf{K}_{s,n}}{\rm maximize} &\quad \frac{1}{n}I(M; \tilde{Y}^n)\\
\text{subject to} &\quad tr(\mathbf{K}_{x,n})\leq nP, \quad \mathbf{K}_{s,n}\geq 0 \\
&\quad \text{$\mathbf{B}_n$ is strictly lower triangular}\\
\end{split}
\label{eq3_01}
\end{equation}

\indent Similarly, we show that the above optimization problem can be transformed into a convex form.
\begin{theorem}
A lower bound on the largest n-block achievable rate of linear feedback coding scheme for Gaussian channels with additive noise feedback, as shown in Fig.\ref{ISIT_scheme01}, can be obtained as the optimal value of the following convex optimization problem.
\begin{equation*}
\begin{split}
\quad \underset{\mathbf{H}_n,\mathbf{B}_{n}}{\rm maximize} &\quad \frac{1}{2n}\log\det\mathbf{H}_n -\frac{1}{2n}\log\det\mathbf{K}_{wv,n}\\
\text{subject to} &\quad tr(\mathbf{H}_n-\mathbf{K}_{wv,n}\mathbf{B}_n^T-\mathbf{B}_n\mathbf{K}_{wv,n}-\mathbf{K}_{wv,n})\leq nP \\
&\quad \begin{bmatrix}  \mathbf{H}_n & \mathbf{I}_n+\mathbf{B}_n^T \\ \mathbf{I}_n+\mathbf{B}_n & \mathbf{K}_{wv,n}^{-1} \end{bmatrix}\geq 0\\
&\quad \text{$\mathbf{B}_n$ is strictly lower triangular}\\
\end{split}
\end{equation*}
where $\mathbf{K}_{wv,n}=\mathbf{K}_{v,n}+\mathbf{K}_{w,n}$.
\label{thm02}
\end{theorem}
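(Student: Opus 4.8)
The plan is to follow the same reformulation strategy used for Theorem \ref{thm01}, but the computation is more direct: for the modified channel of Fig. \ref{ISIT_scheme02} the objective in (\ref{eq3_01}) is an ordinary mutual information $I(M;\tilde Y^n)$ rather than a conditional directed information, because the extra noise $\mathbf{v}^n$ is injected at the output and the channel noise and feedback noise combine into a single term $\mathbf{w}^n+\mathbf{v}^n$. First I would evaluate this mutual information in closed form, then change variables from $(\mathbf{B}_n,\mathbf{K}_{s,n})$ to $(\mathbf{H}_n,\mathbf{B}_n)$ to expose a max-det structure, mirroring the manipulations in the proof of Theorem \ref{thm01} with $\mathbf{K}_{w,n}$ replaced throughout by $\mathbf{K}_{wv,n}=\mathbf{K}_{w,n}+\mathbf{K}_{v,n}$.

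For the objective I would write $I(M;\tilde Y^n)=I(S^n;\tilde Y^n)=h(\tilde Y^n)-h(\tilde Y^n|S^n)$ exactly as in the $n$-block rate derivation, with $S^n\sim N(0,\mathbf{K}_{s,n})$ Gaussian. Using $\tilde{\mathbf{y}}^n=\mathbf{s}^n+(\mathbf{I}_n+\mathbf{B}_n)(\mathbf{w}^n+\mathbf{v}^n)$ and the independence of $S^n$ from $(W^n,V^n)$, the conditional term is $h(\tilde Y^n|S^n)=h((\mathbf{I}_n+\mathbf{B}_n)(W^n+V^n))$; since $\mathbf{B}_n$ is strictly lower triangular, $\det(\mathbf{I}_n+\mathbf{B}_n)=1$, so this equals $h(W^n+V^n)=\tfrac12\log(2\pi e)^n\det\mathbf{K}_{wv,n}$. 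As $\tilde Y^n$ is a sum of independent Gaussian vectors it is itself Gaussian, so Lemma \ref{lem_pre01} holds with equality and $h(\tilde Y^n)=\tfrac12\log(2\pi e)^n\det\mathbf{H}_n$, where $\mathbf{H}_n=\mathrm{Cov}(\tilde Y^n)=\mathbf{K}_{s,n}+(\mathbf{I}_n+\mathbf{B}_n)\mathbf{K}_{wv,n}(\mathbf{I}_n+\mathbf{B}_n)^T$. The $(2\pi e)^n$ factors cancel, yielding $\tfrac1n I(M;\tilde Y^n)=\tfrac1{2n}\log\det\mathbf{H}_n-\tfrac1{2n}\log\det\mathbf{K}_{wv,n}$, which is the stated objective.

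Next I would translate the constraints. The power constraint reads $tr(\mathbf{K}_{x,n})\le nP$ with $\mathbf{K}_{x,n}=\mathbf{K}_{s,n}+\mathbf{B}_n\mathbf{K}_{wv,n}\mathbf{B}_n^T$; substituting $\mathbf{K}_{s,n}=\mathbf{H}_n-(\mathbf{I}_n+\mathbf{B}_n)\mathbf{K}_{wv,n}(\mathbf{I}_n+\mathbf{B}_n)^T$ and expanding, the two $\mathbf{B}_n\mathbf{K}_{wv,n}\mathbf{B}_n^T$ contributions cancel and leave $tr(\mathbf{H}_n-\mathbf{K}_{wv,n}\mathbf{B}_n^T-\mathbf{B}_n\mathbf{K}_{wv,n}-\mathbf{K}_{wv,n})\le nP$. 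For the signaling-covariance constraint I would rewrite $\mathbf{K}_{s,n}\ge0$ as $\mathbf{H}_n-(\mathbf{I}_n+\mathbf{B}_n)\mathbf{K}_{wv,n}(\mathbf{I}_n+\mathbf{B}_n)^T\ge0$ and apply the Schur complement, using $\mathbf{K}_{wv,n}>0$ (hence invertible since $\mathbf{K}_{w,n},\mathbf{K}_{v,n}>0$), to obtain the displayed $2n\times 2n$ block LMI; the strictly-lower-triangular requirement on $\mathbf{B}_n$ carries over unchanged as a family of linear equalities.

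Finally I would establish convexity: $\log\det\mathbf{H}_n$ is concave on the positive-definite cone, the term $\log\det\mathbf{K}_{wv,n}$ is a constant, the power constraint is affine in $(\mathbf{H}_n,\mathbf{B}_n)$, and the block LMI together with the triangularity equalities cut out a convex spectrahedral feasible set, so maximizing the concave objective is a convex max-det problem. The step I expect to require the most care is the change of variables itself: I must verify that ranging over $(\mathbf{H}_n,\mathbf{B}_n)$ subject to $\mathbf{H}_n-(\mathbf{I}_n+\mathbf{B}_n)\mathbf{K}_{wv,n}(\mathbf{I}_n+\mathbf{B}_n)^T\ge0$ is in exact correspondence with the original variables $(\mathbf{B}_n,\mathbf{K}_{s,n}\ge0)$ with both the objective and the power budget preserved, and in particular that the Schur-complement manipulation reproduces \emph{precisely} the block matrix in the statement, including the placement of the transposes in the off-diagonal blocks. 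Once $\det(\mathbf{I}_n+\mathbf{B}_n)=1$ and the exact Gaussianity of $\tilde Y^n$ are in hand, the remaining determinant and trace identities are routine.
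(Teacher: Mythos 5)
Your proposal is correct and is essentially the proof the paper intends: the paper itself omits the argument, saying only that it is ``similar to that in \cite{Boyed98} by considering the above setup,'' and your reduction of the modified channel to an ideal-feedback Cover--Pombra problem with combined noise covariance $\mathbf{K}_{wv,n}$, followed by the change of variables to $(\mathbf{H}_n,\mathbf{B}_n)$, the trace cancellation, the Schur-complement LMI, and the max-det convexity observation, is exactly that argument. On the one step you flagged: carrying out the Schur complement for $\mathbf{H}_n-(\mathbf{I}_n+\mathbf{B}_n)\mathbf{K}_{wv,n}(\mathbf{I}_n+\mathbf{B}_n)^T\geq 0$ correctly places $\mathbf{I}_n+\mathbf{B}_n$ in the top-right block and $\mathbf{I}_n+\mathbf{B}_n^T$ in the bottom-left, which is the transposed arrangement of the matrix displayed in the theorem (whose Schur complement would instead be $\mathbf{H}_n-(\mathbf{I}_n+\mathbf{B}_n)^T\mathbf{K}_{wv,n}(\mathbf{I}_n+\mathbf{B}_n)$), so the discrepancy is a transpose typo in the paper's statement (present in Theorem \ref{thm01} as well), not a flaw in your derivation.
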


\indent The proof is similar to that in \cite{Boyed98} by considering the above setup.
%
%

\begin{remark}
This lower bound is tight when $\mathbf{K}_{v,n}=\mathbf{0}$ and becomes increasing loose as $\mathbf{K}_{v,n}$ increases. This lower bound becomes useless when it is below the corresponding nonfeedback capacity. Since we restrict the feedback coding scheme to be linear, $R_{n,max}^{noisy}$ is in fact a lower bound of the capacity. Therefore, the lower bound of $R_{n,max}^{noisy}$ is obviously valid for the capacity.
\end{remark}

\label{sec_lower}
\section{Simulations and Discussion}
\begin{figure}
\includegraphics[scale=0.53]{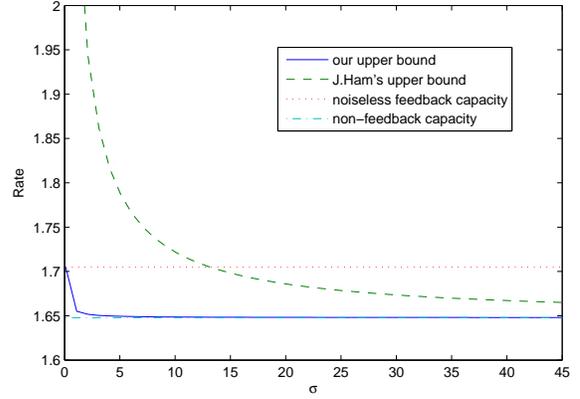}
\caption{Comparison of upper bounds on $C_n^{noisy}$ of the $1$st-MV channel}
\label{sim_comparison}
\end{figure}

\indent In this section, we performed simulations for a specific channel with noisy feedback link. We assumed that the forward channel is created by a first order moving average ($1$st-MV) Gaussian process. Namely,
\begin{equation*}
W_i=U_i+\alpha U_{i-1}
\end{equation*}
where $U_i$ is a white Gaussian process with zero mean and unit variance. We also assumed that the feedback link is created by an additive white Gaussian noise with $\mathbf{K}_{v,n}=\sigma \mathbf{I}_n$ ($\sigma\geq 0$). Because of the practical computation limit, we take coding block length $n=30$ and power limit $P=10$.\\
\indent We first compared our upper bound (i.e. Theorem \ref{thm01}) with the one presented in \cite{Love10} (Lemma $3$). See Fig.\ref{sim_comparison}. As $\sigma$ increases, both of the upper bounds approach the nonfeedback capacity, which implies the ``shut off'' of the feedback link. However, our bound is much more tight, especially, in the small feedback noise region. Note that, when the feedback noise vanishes, the bound in \cite{Love10} on any noisy feedback channel grows to infinity and, thus, should be truncated by the ideal feedback capacity. In contrast, our bound converges to the ideal feedback capacity in this case. Therefore, we may claim that our upper bound is better in general. \\
\indent Next, we computed the bounds derived in our paper for averaging statistic $\alpha=0.3, 0.5,0.9$ in the $1$st-MV channel, as shown in Fig. \ref{sim_fig01_0.1}-\ref{sim_fig01_0.9}. Generally, the plots show that the largest achievable rate $R_{n,max}^{noisy}$, which is in the region between the upper and lower bounds, sharply decreases as $\sigma$ grows. When $\sigma$ grows large enough (i.e. $\sigma=0.8$ in Fig.\ref{sim_fig01_0.1}), the feedback rate-increasing enhancement almost shuts off and, thus, the feedback system behaves like a nonfeedback system. Based on this observation, we conclude that the achievable rate of the additive Gaussian noise channel with noisy feedback is sensitive to the feedback noise under the linear feedback scheme.\\
\indent Additionally, the plots show that the decrease of the achievable rate with $\sigma$ is lesser as $\alpha$ grows. This indicates that the achievable rate is less sensitive to the feedback noise if the channel has more correlated channel noise. This intuitively makes sense since utilize a feedback link for channels with more correlated channel noise would increase more transmission rate and, therefore, the corruption effect of the feedback noise is relatively reduced in this case.

\begin{figure}
\includegraphics[scale=0.46]{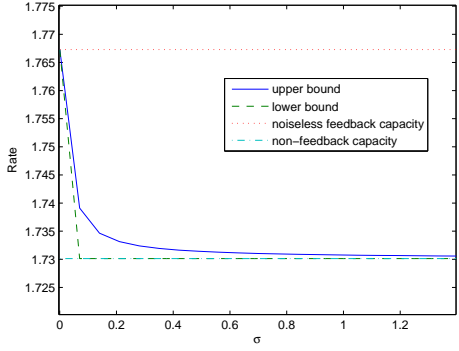}
\caption{The bounds on $C_n^{noisy}$ of the $1$st-MV channel with $\alpha=0.1$}
\label{sim_fig01_0.1}
\end{figure}

\begin{figure}
\includegraphics[scale=0.46]{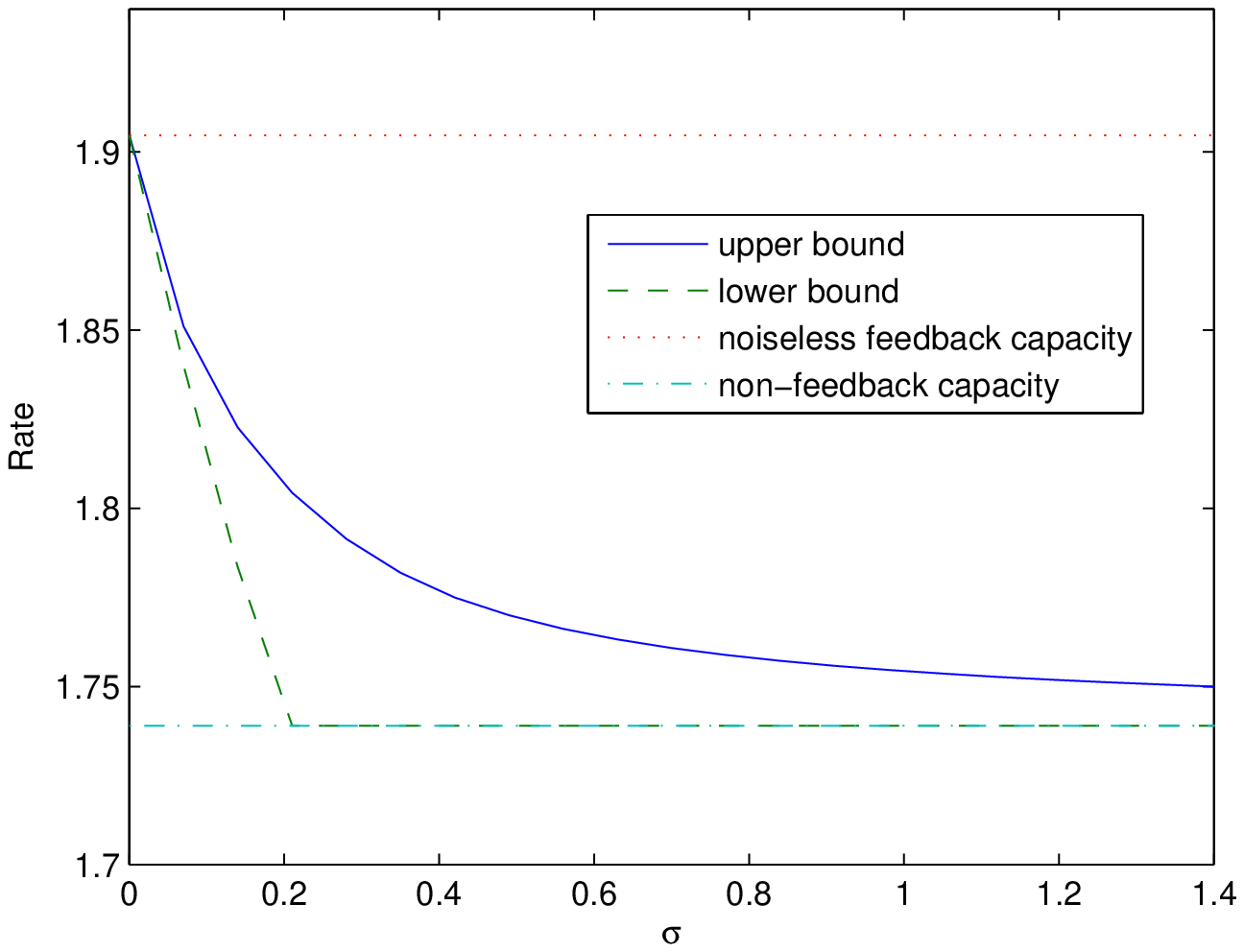}
\caption{The bounds on $C_n^{noisy}$ of the $1$st-MV channel with $\alpha=0.5$}
\label{sim_fig01_0.5}
\end{figure}

\begin{figure}
\includegraphics[scale=0.46]{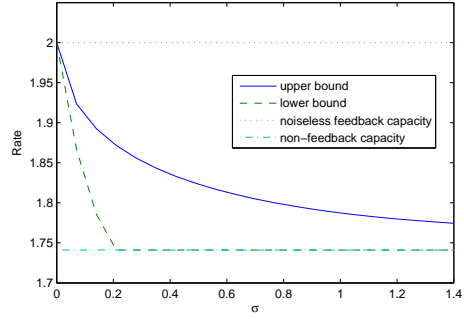}
\caption{The bounds on $C_n^{noisy}$ of the $1$st-MV channel with $\alpha=0.9$}
\label{sim_fig01_0.9}
\end{figure} 
\label{sec_simu}

\section{Conclusion}
\indent We have derived the upper and lower bounds on the largest achievable rate for a linear feedback coding setup. It is shown that these two bounds can be obtained as the optimal values of two convex optimization problems. Furthermore, these bounds provide us the following insight: 1. The achievable rate is very sensitive to the feedback noise. 2. The achievable rate of channels with more correlated channel noise is less sensitive to the feedback noise.
\label{sec_conclude}

\bibliographystyle{IEEEtran}
\bibliography{ref}

\end{document}